\documentclass{article}

\usepackage{arxiv}

\usepackage[utf8]{inputenc} 
\usepackage[T1]{fontenc}    
\usepackage{hyperref}       
\usepackage{url}            
\usepackage{booktabs}       
\usepackage{amsfonts}       
\usepackage{nicefrac}       
\usepackage{microtype}      
\usepackage{lipsum}		
\usepackage{graphicx}
\usepackage[numbers]{natbib}
\usepackage{doi}

\usepackage{enumerate}
\usepackage[shortlabels]{enumitem}

\newcommand{\skeletondef}{S_k = (V_T, E_T)}
\newcommand{\skeleton}{S_k}

\newcommand{\child}[2]{c_{#1,#2}} 
\newcommand{\leaf}[2]{l_{#1,#2}} 

\newcommand{\degreechildren}{\delta_c}

\newcommand{\height}{h}
\newcommand{\inducedGraphRecursiveCalls}{\mathbb{G}}

\newcommand{\treeDecompAHCall}[4]{\textsc{Ah-Td}$(#1,#2,#3, #4)$}
\newcommand{\treeDecompAH}{\textsc{Ah-Td}} 

\newcommand{\treeDecompHalinCall}[1]{\textsc{H-Td}$(#1)$}
\newcommand{\treeDecompHalin}{\textsc{H-Td}}

\usepackage{dutchcal}

\usepackage{comment}
\usepackage{ulem}
\usepackage{soul}

\usepackage{float}

\usepackage{multicol}

\usepackage{mathtools} 
\usepackage{amssymb}
\usepackage{amsthm}

\theoremstyle{definition}

\newtheorem{proposition}{Proposition}
\newtheorem{corollary}{Corollary}
\newtheorem{definition}{Definition}
\newtheorem{remark}{Remark}
\newtheorem{notation}{Notation}

\usepackage{graphicx}
\usepackage{subcaption}

\title{A Practical Linear Time Algorithm for Optimal Tree Decomposition of Halin Graphs}


\author{ \href{https://orcid.org/0009-0007-8014-204X}{\includegraphics[scale=0.06]{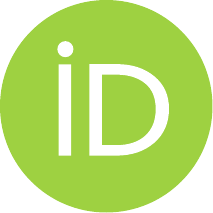}\hspace{1mm}J.A. Alejandro-Soto} \\
        Área de Computación\\
        Centro de Investigación en Matemáticas (CIMAT)\\
        Guanajuato, México\\
	\texttt{jose.alejandro@cimat.mx} \\
	\And
	\href{https://orcid.org/0000-0001-9326-7713}{\includegraphics[scale=0.06]{orcid.pdf}\hspace{1mm}Joel Antonio Trejo-Sanchez} \\
	Unidad Mérida\\
	Centro de Investigación en Matemáticas (CIMAT)\\
	Yucatán, México \\
	\texttt{joel.trejo@cimat.mx} \\
    \And
	\href{https://orcid.org/0000-0002-5431-5927}{\includegraphics[scale=0.06]{orcid.pdf}\hspace{1mm}Carlos Segura} \\
	Área de Computación\\
        Centro de Investigación en Matemáticas (CIMAT)\\
        Guanajuato, México\\
	\texttt{carlos.segura@cimat.mx} \\
}



\hypersetup{
pdftitle={A template for the arxiv style},
pdfsubject={q-bio.NC, q-bio.QM},
pdfauthor={David S.~Hippocampus, Elias D.~Striatum},
pdfkeywords={First keyword, Second keyword, More},
}

\begin{document}
\maketitle

\begin{abstract}
This work proposes \textsc{H-Td}, a practical linear-time algorithm for computing an optimal-width tree decomposition of Halin graphs. 
Unlike state-of-the-art methods based on reduction rules or separators, \textsc{H-Td} exploits the structural properties of Halin graphs. 
Although two theoretical linear-time algorithms exist that can be applied to graphs of treewidth three, no practical implementation has been made publicly available. 
Furthermore, extending reduction-based approaches to partial $k$-trees with $k > 3$ results in increasingly complex rules that are challenging to implement. 
This motivates the exploration of alternative strategies that leverage structural insights specific to certain graph classes.
Experimental validation against the winners of the Parameterized Algorithms and Computational Experiments Challenge (PACE) 2017 and the treewidth library \texttt{libtw} demonstrates the advantage of \textsc{H-Td} when the input is known to be a Halin graph.
\end{abstract}

\keywords{Tree decomposition, Halin graphs, Graph algorithms}


\section{Introduction}\label{sec-intro}

Let $G=(V,E)$ be an arbitrary undirected graph. A \textit{tree decomposition} is a structure  $\mathbb{T} = ( T, \{X_t\}_{t \in V_T})$ where certain conditions hold (see Section \ref{Sec:PrelTerm}), $T = (V_T, E_T)$ is a tree and each node $t \in V_T$ contains a \textit{bag} $X_t$ which is a subset of $V$. This decomposition allows for the  design of efficient algorithms for certain combinatorial optimization problems in the original graph $G$ \citep{lautemann1988decomposition}.  

The \textit{width} of a tree decomposition of a graph $G$ is the size of the largest bag minus one in such a tree decomposition. The \textit{treewidth} of $G$ is the minimum width among all possible tree decompositions of $G$. Computing the treewidth of an arbitrary graph is NP-hard \citep{arnborg1987complexity}. 
For a fixed integer $k$ there is a linear time algorithm that determines if a graph has treewidth at most $k$ and if so, it finds a tree decomposition of width at most $k$~\citep{bodlaender1993linear}. 
However, this algorithm has been studied in depth and is not practical, since it has a large hidden constant that can be up to $2^{O(k^3)}$~\citep{roehrig1998tree}. 
An alternative approach is to compute tree decompositions for partial $k$-trees. 
Two linear-time algorithms were given, one for $k \leq 3$~\citep{LinearTDWidth3} without available implementation and one for $k = 4$~\citep{TW4-Rules} which was tested in a later work~\citep{TW4-Implementation}. 
However, increasing $k$ from three to four required the use of an infinite set of safe reduction rules, together with $60$ leaf structures to be complete. 
Some authors believe that further extensions of this approach for $k \geq 5$ will not be rewarding~\citep{TW4-Implementation}. 
These facts motivates the study of this work: an alternative approach for computing tree decompositions based on the structural properties of the graph classes.

In this paper, we propose \textsc{H-Td}, a linear-time algorithm without large hidden constants to compute an optimal tree decomposition of an arbitrary Halin graph. First, we define two special subgraphs of a Halin graph: the Almost Halin type 1 ($AH_1$) and the Almost Halin type 2 ($AH_2$). Next, we present a linear-time algorithm to compute a tree decomposition for these subgraphs. Finally, we introduce a linear-time algorithm to compute an optimal tree decomposition for arbitrary Halin graphs. In the process of computing such a decomposition for Halin graphs, the tree decompositions algorithms for $AH_1$ and $AH_2$  are required.

The rest of the paper is organized as follows. Section \ref{sec:relwork} summarizes the most relevant work. Section \ref{Sec:PrelTerm} presents some terminology that is useful for describing the algorithm. Section \ref{sec:AdjacencyOrder} gives the order of the adjacency lists needed for the algorithms. Section \ref{sec:Almost} defines the Almost Halin graphs and describes the algorithm for computing their tree decompositions. Section \ref{section:HalinPropuesta} presents \textsc{H-Td}, the algorithm that computes a tree decomposition of a Halin graph. Finally, Section~\ref{section:ExperimentalValidation} validates the performance of \textsc{H-Td} and Section~\ref{Section:Concluding} presents the conclusions of this work. 


\section{Related work}
\label{sec:relwork}

Certain NP-hard problems on graphs admit polynomial-time solutions for specific graph classes. For instance, in trees and cacti, there are algorithms that provide optimal solutions for some NP-hard problems \citep{mjeldek, baiou2014algorithms, flores2018algorithm}. Moreover, graphs with a constant-bounded or known small treewidth exhibit important properties for certain NP-hard optimization problems \citep{lampis2012algorithmic, KosterAplication}. Specifically, for some NP-hard optimization problems, it is well known that these problems admit polynomial-time optimal solutions if the treewidth of $G$ is bounded \citep{downey2013fundamentals}.

Deciding whether the treewidth of an arbitrary graph is bounded by a given integer $k$ is NP-complete \citep{arnborg1987complexity}. However, there exists some approximation algorithms dealing with computing the tree decomposition of a given graph. Feige \textit{et al.} \citep{feige2005improved} designed a polynomial-time algorithm that computes a tree decomposition of width $O(k \sqrt{ \log k })$ in a graph with $n$ vertices and treewidth $k$. Kammer \citep{kammer2016approximate} developed an  algorithm that in a planar graph of treewidth $k$ with $n$ vertices, computes a tree decomposition of width $O(k)$ in $O(n k^2 \log k)$ time.

Although the problem of determining the complexity class of computing the treewidth of planar graphs is open \citep{dissaux2023treelength}, algorithms exist to compute almost optimal tree decompositions on certain planar graphs. In particular, Katsikarelis proposed an algorithm to compute the tree decomposition of $k$-outerplanar graphs of width at most $3k-1$ \citep{katsikarelis2013computing}. Since Halin graphs are $2$-outerplanar graphs, the Katsikarelis algorithm ensures a tree decomposition of width five in these graphs. However, the treewidth of a Halin graph is three \citep{bodlaender1988planar}.

Despite computing the treewidth of a graph is NP-hard in general, for specific classes of graphs it is possible to determine a bound for the treewidth and compute a tree decomposition in polynomial time, as in the cases of cographs \citep{bodlaender1993pathwidth} and circular graphs \citep{sundaram1994treewidth}. Bodlaender and Kloks developed polynomial-time algorithms to compute a tree decomposition in graphs with bounded treewidth \citep{bodlaender1991better}. Later, the complexity was improved to be linear~\citep{bodlaender1993linear}. However, this algorithm is impractical due to the large hidden constants in their complexity \citep{niedermeier2006invitation, kammer2010treelike}. The constant factor has been shown to be up to $2^{O(k^3)}$~\citep{roehrig1998tree}.

Arnborg~\citep{ArnborgPartial3Trees} proposed a set of reduction rules to recognize partial $3$-trees in $O(N^3)$ time and pointed out that it may be accelerated to $O(N \log N)$. 
Later, with one more reduction rule, it was shown how to compute in linear time a tree decomposition of width at most three~\citep{LinearTDWidth3}. 
However, to the best of our knowledge there is no code available. 
A similar approach to recognize graphs with treewidth at most four resulted in an infinite set of safe reductions together with a set of $60$ leaf structures to be complete~\citep{TW4-Rules}. 
The author showed how to derive a linear-time algorithm from the reductions and it was tested in a later work, confirming the expected execution times for recognizing random partial $4$-trees~\citep{TW4-Implementation}. 
There are $75$ minimal forbidden minors for treewidth at most four~\citep{ThesisSanders}, which in some sense justify the increasing number of reductions rules of~\citep{TW4-Rules}. 
This indicates that for larger values of treewidth, algorithms based on reduction rules will require increasingly complex procedures~\citep{TW4-Implementation}. 
Therefore, alternative methods should be explored, such as exploiting the structure of graph classes, as is done in this work.

During the first two editions of the Parameterized Algorithms and Computational Experiments Challenge (PACE) one of the problems was to compute a tree decomposition~\citep{PACE2016, PACE2017}. 
There were two tracks, one of exact algorithms and one of heuristics. 
The solutions in the exact track in year 2017 outperformed the solutions of 2016~\citep{PACE2017}. 
In~\citep{libtw} a library \texttt{libtw} for computing the treewidth and tree decomposition of graphs is presented. 
Bodlaender et al.~\citep{PREPROCESSING_RULES} used a set of safe reduction rules motivated by properties of real-life probabilistic networks. 
However, some of the rules are implemented in quadratic time. 

We compare our algorithm with the best results of PACE 2017 and with the \texttt{libtw} library since these are the best reproducible results with available code.


\section{Preliminaries and terminology}
\label{Sec:PrelTerm}

Let $G = (V_G, E_G)$ be an undirected graph. The \textit{neighborhood} of a vertex $v \in V_G$, denoted by $N(v)$, is the set of vertices $u \in V_G$ such that $(v, u) \in E_G$. The \textit{degree} of a vertex $v$, denoted by $\delta(v)$, is the cardinality of $N(v)$.

A graph $S = (V_S, E_S)$ is an \textit{induced subgraph} of $G$ if $V_S \subseteq V_G$, $E_S \subseteq E_G$ and for any two vertices $u, v \in V_S$, $(u, v) \in E_G$ if and only if $(u, v) \in E_S$. A \textit{tree} $T=(V_T,E_T)$ is a connected graph containing no cycles. The tree is a \textit{rooted tree} if there exists a special node referred to as the root $r$. Vertices of degree one are called \textit{leaves}. In some cases, the root may also be a leaf. An \textit{internal vertex} is a vertex that is not a leaf. Each vertex $v$ different from the root has a \textit{parent} defined as the neighbor of $v$ that is closer to the root. The remaining neighbors of $v$ are its \textit{children}. The number of children of $v$ is referred to as the \textit{degree children} of $v$ and is denoted by $\degreechildren(v)$. 
Once the tree is rooted, we assume that the graph representation is given by adjacency lists. 
A vertex $w$ is an \textit{ancestor} of $v$ if the path from the root $r$ to $v$ passes through $w$. 
Note that under this definition a vertex, $v$ is considered its own ancestor.
The \textit{subtree} $T_v$ rooted at $v$ is the induced subgraph of $T$ consisting of all the vertices $u$ for which $v$ is an ancestor. 
The height of a vertex $v$, denoted by $h(v)$, is defined as the maximum distance between $v$ and any leaf within its subtree $T_v$. 

\begin{remark}
    In any tree $T=(V_T,E_T)$, there exists at least one vertex $u \in V_T$ such that all the children of $u$ are leaves. 
    \label{rem:parentleaves}
\end{remark}

Let $T=(V_T, E_T)$ be a tree embedded in the plane with no vertices of degree two such that $|V_T|\geq 4$. Let $V_C \subset V_T$ be the set of leaves of $T$. Let $E_C$ be a set of edges connecting all the vertices of $V_C$ in the cyclic order of its embedding. Then, the graph $H=(V_H=V_T, E_H=E_T \cup E_C)$ is a \textit{Halin graph} \cite{halin1971studies}, and  $C=(V_C,E_C)$ is its \textit{leaf cycle}.  The smallest Halin graph is $K_4$. 
We refer to the tree of the Halin graph as the \textit{skeleton} $\skeletondef$  \citep{fomin20063}. In contrast, Flores \textit{et. al.} define the skeleton $\skeleton$ as the subgraph of $H$ induced by $V_S=V_H \setminus V_C$ \citep{flores2020distributed}. 
Eppstein presented a linear time algorithm based on two reduction rules for recognizing Halin graphs and in the positive case determine a valid decomposition in the skeleton and the leaf cycle~\citep{Eppstein:2016}.

Some additional relevant properties of any Halin graph $H$ are the following: $H$ is $2$-outerplanar and 3-connected \citep{halin1971studies}, $H$ is Hamiltonian and 1-Hamiltonian ($H$ remains hamiltonian after deleting any vertex or any edge from $H$) \citep{cornuejols1983halin}, $H$ is almost pancyclic ($H$ contains cycles of almost every length between 3 and $\vert V_H \vert$) \citep{skowronska1985pancyclicity}, 
$H$ always has at least two size-3 cliques \citep{bondy1985lengths}. 

\begin{remark}
    \label{rem:subtree_leaves_connected}
    Let $H$ be a Halin graph with skeleton $\skeletondef$ rooted at $r$ and leaf cycle $C$. For each vertex $v \in V_T$ the leaves in the subtree $T_v$ induce a connected component in $C$.
\end{remark}

\begin{remark}
\label{remark:vertexdegree3}
In a Halin graph $H = (V, E)$ with skeleton $\skeletondef$, there exists an internal vertex $v \in V_T$ with degree at least three.
\end{remark}

Let $G=(V,E)$ be a graph. A \textit{tree decomposition} of $G$  is a pair $\mathbb{T} = ( T, \{X_t\}_{t \in V_T})$ where $T = (V_T, E_T)$ is a tree and each node $t \in V_T$ contains a \textit{bag} $X_t$ which is a subset of $V$ \citep{diestel2012graph}. Additionally, the following conditions are required.

\begin{enumerate}[start=1,label={(\bfseries C\arabic*)}] \label{conditionsTreeDecompositon}
    \item $\bigcup_{t \in V_T} X_t = V$ \label{condition:C1}
    \item For each edge $(u,v) \in E$, there exists a node $t$ of $T$ such that both $u$ and $v$ are included in the bag $X_t$. We say that an edge is included in a bag if its endpoints are included. \label{condition:C2}
    \item If a vertex $u \in V$ is included in two bags $X_A, X_B$, all the bags in the path from $A$ to $B$ contain the vertex $u$. Alternatively, the set of nodes that include $u$ in their bag induces a connected subtree of $T$. \label{condition:C3}
\end{enumerate}

We adopt the convention used in~\citep{Cygan2015ParameterizedAlgorithms} where the term \textit{node} refers to the vertices of $\mathbb{T}$ while the term \textit{vertex} is reserved for $H$.


\section{Adjacency order}
\label{sec:AdjacencyOrder}

In order to build the tree decomposition, we need to give a particular order to the adjacency list. In this section we introduce this specific order and notation derived from it that will be used in the description of the final algorithm. 
This process assumes that it can distinguish the skeleton and the leaf cycle  of the Halin. In the \textsc{H-Td} algorithm we will use Eppstein's procedure~\citep{Eppstein:2016} to get this information in linear time. 

\begin{proposition}
\label{TEO:ORDEN}
Given a Halin graph $H = (V, E)$ with skeleton $\skeletondef$ and leaf cycle $C=(V_C,E_C)$, for each vertex $r \in V_T$ with degree at least three, there exists a specific traversal of $C$ such that if we root $\skeleton$ at $r$, it is possible to rearrange the adjacent list of each vertex in $V_T$ in such a way that a depth-first search starting from $r$ visits the leaves in the same order as in this traversal of $C$.
\end{proposition}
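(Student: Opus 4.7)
The plan is to leverage the plane embedding of the skeleton and the fact (Remark \ref{rem:subtree_leaves_connected}) that the leaves of each subtree of $\skeleton$ form a connected arc of the leaf cycle $C$. Once we root $\skeleton$ at $r$, the plane embedding induces a cyclic order on the edges incident to each vertex $v$; by removing the edge to the parent of $v$ (or simply picking a starting neighbor at $r$, since $r$ has no parent), this cyclic order becomes a linear order on the children of $v$. We take this linear order as the order of the adjacency list of $v$ in $\skeleton$.

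With this choice, I would prove by induction on $\height(v)$ the following claim: for every internal vertex $v\in V_T$, a DFS of $T_v$ starting at $v$ (using the prescribed adjacency order) visits the leaves of $T_v$ in exactly the order in which they appear along the arc of $C$ corresponding to $T_v$. The base case is the situation of Remark \ref{rem:parentleaves}, where every child of $v$ is a leaf: by Remark \ref{rem:subtree_leaves_connected} the children form a consecutive arc of $C$, and the planar embedding of the leaf cycle places them in the same cyclic order as the children of $v$ around $v$, so both orderings coincide. For the inductive step, let $v$ have children $c_1,\ldots,c_k$ in the chosen linear order; by Remark \ref{rem:subtree_leaves_connected} each $T_{c_i}$ contributes a consecutive sub-arc $A_i$ of the arc associated with $T_v$, and the planarity of the embedding forces $A_1,\ldots,A_k$ to appear consecutively along $C$ in this order. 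Applying the inductive hypothesis to each $c_i$ finishes the step.

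Applying this claim at $v=r$ (which is internal because $\degree(r)\geq 3$), the DFS from $r$ visits the leaves in a cyclic order that matches a traversal of $C$ starting at the first leaf of $T_{c_1}$ and going along $C$ in the direction determined by the embedding. The extension of the adjacency order to vertices in $V_C$ is immaterial for the DFS on $\skeleton$, so we may order the adjacency lists of leaves arbitrarily to obtain the statement of the proposition.

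The main obstacle is the geometric step in the inductive argument: one must argue carefully that the sub-arcs $A_1,\ldots,A_k$ appear on $C$ in the same cyclic order as the children $c_1,\ldots,c_k$ of $v$ around $v$ in the embedding. This follows from the definition of the leaf cycle (which connects leaves in the cyclic order of the plane embedding of $\skeleton$) together with the fact that $\skeleton$ is a tree embedded in the plane, so distinct subtrees rooted at siblings of $v$ lie in disjoint faces of the partial drawing and hence cannot interleave along $C$. Making this topological argument precise, rather than the inductive bookkeeping, is the crux of the proof.
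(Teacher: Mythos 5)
Your proposal is sound but takes a genuinely different route from the paper. You define the adjacency order \emph{from the plane embedding}: cut the rotation system at the parent edge to linearize the children, then prove by induction on height that the DFS traces the leaf arcs in cycle order. The burden of your argument is the topological lemma you correctly flag as the crux — that the leaf-arcs of sibling subtrees appear along $C$ in the same cyclic order as the children appear around $v$ in the embedding; this is true (it is the statement that the boundary walk of the outer face of a plane tree visits leaves grouped by subtree in rotation order), but you leave it as a sketch. The paper goes the other way: it derives the order \emph{from the leaf cycle itself}. It picks two leaves $v,w$ adjacent in $C$ whose lowest common ancestor is $r$, numbers the leaves $x_u = 0,\dots,|V_C|-1$ along the traversal of $C$ from $v$ to $w$, extends $x$ to internal vertices as the minimum over their subtree's leaves, and sorts each adjacency list by $x$-value. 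The careful choice of $v,w$ guarantees that no proper subtree's arc wraps around the cut point, so by Remark~\ref{rem:subtree_leaves_connected} every subtree's leaf set occupies a genuine interval $[a_y,b_y]$, and sorting children by $x$-value automatically makes these intervals consecutive — no appeal to the rotation system or to planarity beyond Remark~\ref{rem:subtree_leaves_connected} is needed. What each approach buys: yours is arguably more geometric and explains \emph{why} the embedding order works, but requires access to the rotation system and a careful non-interleaving argument; the paper's is purely combinatorial given $C$ as a cyclic sequence of leaves, which is exactly what the Eppstein preprocessing provides, and its construction translates directly into the linear-time procedure of Appendix~\ref{appendix:LinearTeo1}. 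If you pursue your route, you should also fix a consistent orientation (say clockwise) when linearizing the rotation at every vertex, since an inconsistent choice across vertices would break the traversal direction.
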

\begin{proof}
First, let $r \in V_T$ be an internal vertex with degree at least three (as ensured by Remark \ref{remark:vertexdegree3}). Next, root $\skeleton$ at  $r$. 
Choose a leaf $v \in V_T$ such that the lowest common ancestor of $v$ and one of its neighbors in $C$, say $w$, is $r$. 
Note that such a vertex $v$ exists because each child of $r$ induces a non-empty subtree, and since the leaves form a cycle, there will be at least one pair of leaves from different subtrees that are adjacent in $C$.

For each leaf $u \in V_T$, define $x_u$ as the number of leaves visited before $u$ when traversing $C$ starting from $v$ and ending in $w$. Specifically, we have $x_v=0$ and $x_w=|V_C|-1$.  
For each internal vertex $y \in V_T$, define $x_y$ as the minimum $x_u$ for any leaf $u$ in the subtree of $y$. 
See Figure \ref{fig:AdjOrdering} for an example of these values. 

For each internal vertex, sort its children in increasing order based on their $x$-values. We will now demonstrate that this ordering achieves the desired structure.

Since $v$ and $w$ were specifically chosen such that their lowest common ancestor is $r$, the only vertex that can have both of these leaves in its subtree is $r$. 
Furthermore, by Remark \ref{rem:subtree_leaves_connected} and since the $x$-values for the leaves were assigned incrementally in a specific direction, we obtain that for each vertex $y \in V_T$, the $x$-values of the leaves within its subtree form a continuous interval $[a_y, b_y]$ with $0 \leq a_y \leq b_y \leq |V_C| - 1$.
Let $z \in V_T$ be any vertex with interval $[a_z, b_z] \subseteq [0, |V_C|-1]$. By the same argument, if $y_1, y_2, ..., y_k$ are the children of $z$ (arranged in the proposed order) with intervals $[a_{y_1}, b_{y_1}], [a_{y_2}, b_{y_2}], ..., [a_{y_k}, b_{y_k}]$, then we have $a_{y_1} = a_z$, $b_{y_k} = b_z$ and for $1 \leq i < k$, $b_{y_i} + 1 = a_{y_{i+1}}$. Therefore, a depth-first search conducted from $r$ in this order will visit the leaves incrementally by their $x$-values, which by construction corresponds to a traversal of $C$. 
\end{proof}

\begin{figure}[!t]
  \centering 
  \includegraphics[width=0.4\textwidth]{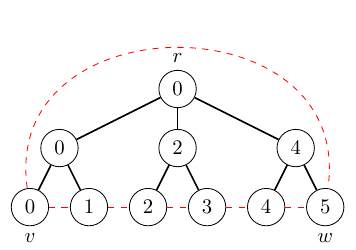}
  \caption{A Halin graph illustrating a possible selection of vertices $r$, $v$ and $w$. Solid edges represent the skeleton, while dashed edges form the leaf cycle. The $x$-values corresponding to Proposition~\ref{TEO:ORDEN} are displayed within each vertex}
  \label{fig:AdjOrdering}
\end{figure}

In Appendix \ref{appendix:LinearTeo1} we discuss how to compute the order in linear time. 
Through the rest of the paper, we assume that $\skeleton$ is rooted at an internal vertex of degree at least three 
and that the procedure to sort the adjacency list of each vertex (as described in the previous proof) is applied.

\begin{definition}
\label{defi:adjnotation}
For a vertex $v \in \skeleton$ and $1 \leq i \leq \degreechildren(v)$, we denote by $\child{v}{i}$ the $i$-th children of $v$.  In addition, we denote its last children by $\child{v}{\infty}$.
\end{definition}

\begin{definition}
\label{defi:leafnotation}
For a vertex $v \in \skeleton$, we denote by $\leaf{v}{i}$ the $i$-th leaf visited in the subtree of $v$ when running a depth-first search restricted to its subtree. In addition, we denote its last leaf visited by $\leaf{v}{\infty}$.
\end{definition}

\begin{definition}
\label{defi:leaveslabel}
For each leaf $v \in \skeleton$, we define its $x$-value to be $x_v$ as in Proposition~\ref{TEO:ORDEN}.
\end{definition}

Note that considering only the leaves, the $x$-values are unique. Furthermore, the $x$-values are numbered in the direction of a traversal of $C$ so we can define the next and previous leaf of each.

\begin{definition}
\label{defi:leafNxtPrv}
For a leaf $v \in \skeleton$ with $x$-value $x_v$, we denote by $v^+$ the leaf with $x$-value $(x_v + 1) \mod |V_C|$ and with $v^-$ the leaf with $x$-value $(x_v - 1) \mod |V_C|$. 
\end{definition}


\section{Almost Halins}
\label{sec:Almost}

When building the tree decomposition of the Halin graph, we will encounter subgraphs with two 
specific structures, referred to as Almost Halin. This section presents Almost Halin graphs and outlines a method to compute a tree decomposition of width three for these graphs in linear time. Our procedure uses the information inherited by the Halin graph $H$; particularly, the partition of the edges of a Halin into the skeleton $\skeleton$ and the leaf cycle $C$, the root of $\skeleton$ and the adjacency order of Section \ref{sec:AdjacencyOrder}.  

\begin{definition}
For $v \in \skeleton$ and $1 \leq i \leq j \leq \degreechildren(v)$ we denote by $AH_1(v, i, j)$ the induced subgraph of $H$ containing $v$ and the vertices in the subtrees of $\child{v}{k}$ for $i \leq k \leq j$, together with the vertex $\omega_1(v, j) = \leaf{\child{v}{j}}{\infty}^+$ and the edge connecting $\omega_1(v, j)$ to $\leaf{\child{v}{j}}{\infty}$.
We refer to the graphs obtained this way as \textit{Almost Halin type $1$} and we refer to $v$ as its root.  
\end{definition}

\begin{definition}
For $v \in \skeleton$ and $1 \leq i \leq j \leq \degreechildren(v)$, we denote by $AH_2(v, i, j)$ the induced subgraph of $H$ containing $v$ and the vertices in the subtree of $\child{v}{k}$ for $i \leq k \leq j$. We refer to the graphs obtained in this way as \textit{Almost Halin type $2$} and we refer to $v$ as its root. We also denote $\omega_2(v,j) = \leaf{ \child{v}{j} }{\infty}$. In the particular case of $AH2(r, 1, \degreechildren(r))$ we exclude the edge between the vertices $\leaf{r}{1}$ and $\omega_2(v, j)$.
\end{definition}

\textit{Almost Halin type $1$} graphs $AH_1(v, i, j)$ are built from trees in which consecutive leaves are connected, as in Halin graphs, and additionally the last leaf has an extra neighbor while \textit{Almost Halin type $2$} does not include the extra neighbor (see Figure \ref{fig:AlmostHalin}). 
We say that $AH_t(v, i, j)$ for $t \in \{1, 2\}$ is \textit{well-defined} if $1 \leq i \leq j \leq \degreechildren(v)$ and the resulting graph is different from $H$, which only occurs for $AH_1(r, 1, \degreechildren(r))$. Otherwise, we say that it is \textit{degenerate}. 

\begin{figure*}[!t]
  \centering
  \includegraphics[width=0.8\textwidth]{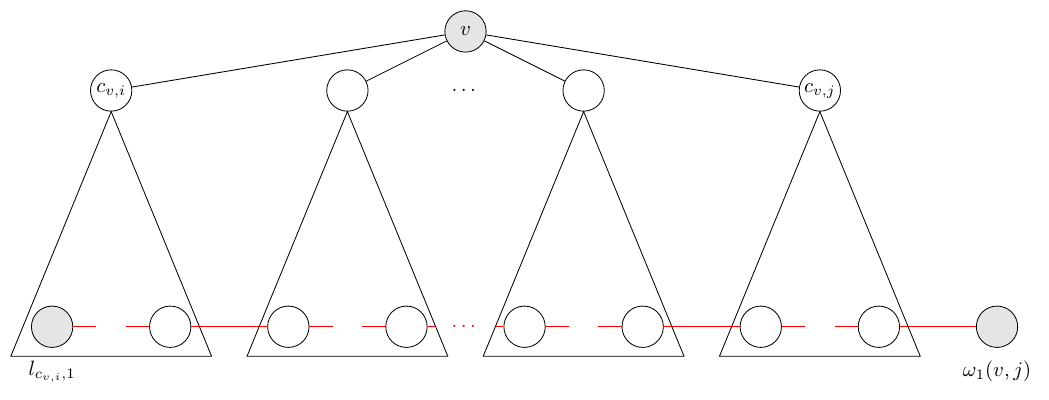}
  \caption{Illustration of $AH_1(v, i, j)$. The gray vertices are the representatives. In this case, $AH_2(v, i, j)$ is obtained by deleting the vertex $\omega_1(v,j)$}
  \label{fig:AlmostHalin}
\end{figure*}

\begin{definition}
For $v \in \skeleton$ and $1 \leq i \leq j \leq \degreechildren(v)$, we define the \textit{representatives} of a well-defined $AH_t(v, i, j)$ for $t \in \{1, 2\}$ as the set of vertices $\{v, \leaf{ \child{v}{i} }{ 1}, \omega_t(v, j)\}$.
\end{definition}

\begin{definition}
For $v, w \in \skeleton$, $1 \leq i \leq j \leq \degreechildren(v)$ and $1 \leq k \leq l \leq \degreechildren(w)$, we say that a well-defined $AH_s(w, k, l)$ is strictly smaller than a well-defined $AH_t(v, i, j)$ for $s,t \in \{1,2\}$ if the set of vertices of the former is a proper subset of the set of vertices of the latter.
\end{definition}

Now, we present \treeDecompAHCall{t}{v}{i}{j}, a linear-time algorithm to compute a tree decomposition of width three of a well-defined $AH_t(v, i,j)$ for $t \in \{1, 2\}$. 
To compute the tree decomposition of $AH_t(v, i, j)$, the algorithm recursively computes the tree decompositions of at most two strictly smaller Almost Halin.
In the general case (Figure \ref{fig:AlmostHalinTreeDecomposition}), \treeDecompAHCall{t}{v}{i}{j} calls \treeDecompAHCall{ 1 }{ v }{ i }{ i } and \treeDecompAHCall{t}{v}{i + 1}{j}. 
The reason is because their corresponding Almost Halin share only the vertex $v$ and the leaf $\leaf{ \child{v}{i + 1} }{ 1}$, which are common representatives for both graphs.

\begin{figure*}[!t]
  \centering
  \includegraphics[width=0.8\textwidth]{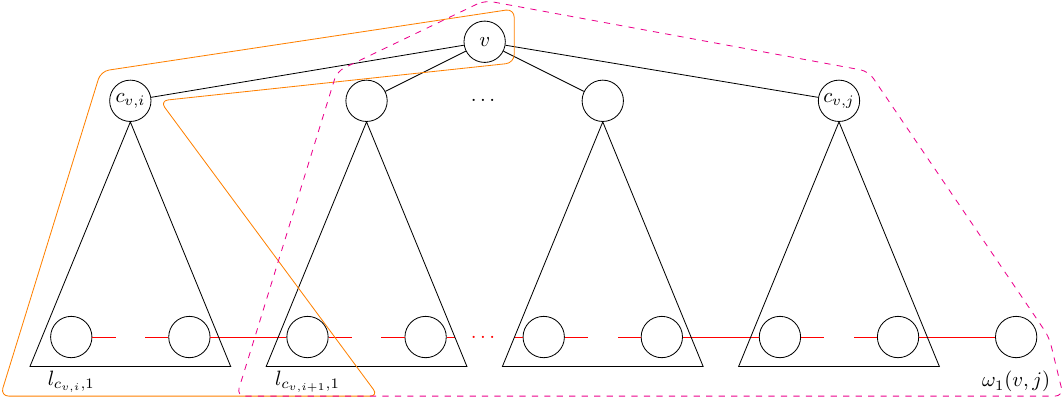}
  \caption{$AH_1(v, i, i)$ and $AH_1(v, i + 1, j)$ are enclosed in solid and dashed lines respectively}
  \label{fig:AlmostHalinTreeDecomposition}
\end{figure*}

\vspace{1em}
\noindent\textbf{\large\treeDecompAHCall{t}{v}{i}{j}}
\label{AlmostHalin1Algoritmo}

\begin{enumerate}[start=1,label={(\bfseries Step \arabic*)},leftmargin=3.5em]
    \item \label{AlmostHalin1:Step1} Create a tree $T$ with a single node $R$, whose bag is the set of representatives of $AH_t(v, i, j)$.
        \begin{itemize}
            \item If $i = j$
                \begin{itemize}[label={--}]
                    \item If $\height( \child{v}{i} ) = 0$ then go to \ref{AlmostHalin1:Step4}
                    \item Add one child $S$ to $R$ with bag $\{v, \leaf{ \child{v}{i} }{ 1}, \omega_t(v, j), \child{v}{i}\}$. Note that we include the $i$th child of $v$. 
                           \item Add one child $A$ to $S$ with bag $\{\leaf{ \child{v}{i} }{ 1}, \omega_t(v, j), \child{v}{i}\}$. 
                           \item Since $\height( \child{v}{i} ) > 0$, the graph $AH_t( \child{v}{i}, 1 ,\degreechildren(\child{v}{i}))$ is well-defined and the bag of $A$ is the set of its representatives. 
                           Append the construction of \treeDecompAHCall{t}{ \child{v}{i} }{1}{ \degreechildren(\child{v}{i}) } to $T$ identifying the node $A$ (of this call) with node $R$ (of the invoked call).
                           \item Go to \ref{AlmostHalin1:Step4}
                \end{itemize}
            \item Else ($i < j$)
                \begin{itemize}[label={--}]
                    \item Add one child $S$ to $R$ with bag $\{v, \leaf{ \child{v}{i} }{ 1}, \leaf{ \child{v}{i+1} }{ 1} , \omega_t(v, j) \}$. Note that we include the vertex $ \leaf{ \child{v}{i+1} }{ 1}$ in the bag, which is the first leaf in the subtree of the $(i+1){th}$ child of $v$; for simplicity, let us name it $x$. 
                    \item Add two children $A$ and $B$ to $S$ with bags $\{v, \leaf{ \child{v}{i} }{ 1}, x \}$ and $\{v, x, \omega_t(v, j)\}$, respectively. 
                    \item Go to \ref{AlmostHalin1:Step2}. 
                \end{itemize}
        \end{itemize}
    \item \label{AlmostHalin1:Step2}
        Since $AH_t(v, i, j)$ is well-defined, it follows that $AH_1( v, i , i)$ is also well-defined and the bag of $A$ is the set of representatives of $AH_1(v, i, i)$. Append the construction of \treeDecompAHCall{1}{v}{i}{i} to $T$, identifying the node $A$ (of this call) with node $R$ (of the invoked call). 
        Go to \ref{AlmostHalin1:Step3}.
        
    \item \label{AlmostHalin1:Step3}
        Since $AH_t(v, i, j)$ is well-defined and at this point $i < j$, it follows that $AH_t( v, i + 1 , j)$ is also well-defined and the bag of $B$ is the set of representatives of $AH_t(v, i + 1, j)$. Append the construction of \treeDecompAHCall{t}{v}{i+1}{j} to $T$, identifying the node $B$ (of this call) with node $R$ (of the invoked call). 
        Go to \ref{AlmostHalin1:Step4}.
        
    \item \label{AlmostHalin1:Step4} Return $T$
\end{enumerate}

\begin{proposition}
\label{AH-Linear}
\treeDecompAHCall{t}{v}{i}{j} finishes in linear time for any $t \in \{1, 2\}$
\end{proposition}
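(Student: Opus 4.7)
The plan is to split the analysis of \treeDecompAH into two independent pieces: the non-recursive work done inside a single invocation, and the total number of invocations produced by the top-level call. Bounding each piece separately and multiplying gives the linear-time claim.

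For the per-invocation work, I will simply inspect \ref{AlmostHalin1:Step1}--\ref{AlmostHalin1:Step4}. Each call creates a constant number of tree-decomposition nodes (namely $R$, $S$, and either $A$ alone or the pair $A,B$), and every bag has size at most four. The bag contents required---$\leaf{\child{v}{i}}{1}$, $\leaf{\child{v}{i+1}}{1}$, $\omega_t(v,j)$ and $\child{v}{i}$---can all be fetched in $O(1)$, provided that during preprocessing (together with the construction of the adjacency order of Section~\ref{sec:AdjacencyOrder}) I tabulate a pointer to the last leaf $\leaf{w}{\infty}$ of the subtree of every internal vertex $w$. The identification step simply reuses the root of the recursive call in place of the current $A$ or $B$, so it contributes no work to the caller. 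Hence every invocation spends $O(1)$ time outside its recursive subcalls.

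For the total number of invocations, I will charge them to the vertices of the skeleton of the input Almost Halin. Each vertex $w$ of this skeleton, other than the root of the top-level call, is reached through exactly one entry point: a singleton call $(1,\parent{w},k,k)$ from its parent that recurses to $(1,w,1,\degreechildren(w))$. From that entry, all calls having $w$ as root form a single chain of range-splits $(t,w,i,\degreechildren(w))$ for $i = 1, 2, \ldots, \degreechildren(w)$, each of which (except the terminal one) additionally spawns a singleton $(1,w,i,i)$. A direct count gives at most $2\degreechildren(w)$ invocations rooted at $w$. Summing over all skeleton vertices, and using $\sum_w \degreechildren(w) = |V_T|-1$, I obtain at most $O(|V_T|)$ invocations in total, which is $O(|V(AH_t(v,i,j))|)$. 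Combining this with the $O(1)$ per-call cost yields the claimed linear running time.

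The main subtlety I expect is justifying the constant-time bag construction inside every call. The tricky terms are $\leaf{\child{v}{i}}{1}$ and $\omega_t(v,j)$, both defined via depth-first traversals of subtrees; if these were recomputed on the fly, a single invocation could take time proportional to the size of the subtree and destroy the analysis. I will therefore verify carefully that, at the same time as the adjacency ordering is computed and $\skeleton$ is rooted, one additional linear-time traversal suffices to cache, at every internal vertex $w$, the pointers to $\leaf{w}{1}$ and $\leaf{w}{\infty}$, so that $\omega_t(v,j)$ can be obtained in constant time from $\leaf{\child{v}{j}}{\infty}$ and its cyclic successor $\cdot^+$.
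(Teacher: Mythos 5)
Your proposal is correct and follows essentially the same route as the paper's proof: the paper also bounds the number of invocations by charging them to skeleton vertices (it counts $2(j-i+1)-1$ calls rooted at $v$ and $2\degreechildren(u)-1$ calls rooted at each other internal vertex $u$) and then asserts $O(1)$ work per call. Your additional care about caching pointers to $\leaf{w}{1}$ and $\leaf{w}{\infty}$ during preprocessing is a welcome refinement of a step the paper states without justification ("in each we make $O(1)$ operations"), but it does not change the structure of the argument.
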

\begin{proof}
Since each recursive call considers at most two strictly smaller Almost Halin, the recursion will terminate. Furthermore, vertex $v$ will be considered the root of an Almost Halin in $2(j -i + 1) - 1$ calls, and each internal vertex $u$ different from $v$ will be considered the root of an Almost Halin in $2\degreechildren(u) - 1$ calls. Hence, there will be a linear amount of recursive calls and in each we make $O(1)$ operations across the three steps. Therefore, the algorithm finishes in linear time with respect to the number of vertices of $AH_t(v, i, j)$.
\end{proof}

\begin{corollary}
\label{Coro:NoCiclos}
\treeDecompAHCall{t}{v}{i}{j} does not cycle for any $t \in \{1, 2\}$
\end{corollary}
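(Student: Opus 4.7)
The plan is to derive this immediately from Proposition~\ref{AH-Linear}. That proposition establishes that \treeDecompAHCall{t}{v}{i}{j} terminates in linear time with respect to the number of vertices of $AH_t(v,i,j)$. Since termination in finite (let alone linear) time precludes any infinite chain of recursive invocations, the algorithm cannot cycle.

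More concretely, I would point to the well-foundedness argument already implicit in the proof of Proposition~\ref{AH-Linear}: each recursive call is made on a strictly smaller Almost Halin (as guaranteed by \ref{AlmostHalin1:Step2} and \ref{AlmostHalin1:Step3}, which invoke $AH_1(v,i,i)$ and $AH_t(v,i+1,j)$, both strictly smaller than $AH_t(v,i,j)$), and the recursion in \ref{AlmostHalin1:Step1} into \treeDecompAHCall{t}{\child{v}{i}}{1}{\degreechildren(\child{v}{i})} is on a subtree with strictly smaller height. In all cases, the size (number of vertices) of the subgraph passed to the recursive call is strictly less than that of the current input, and the recursion halts at the degenerate cases (for instance, when $\height(\child{v}{i}) = 0$ in the $i=j$ branch, where the algorithm skips directly to \ref{AlmostHalin1:Step4}). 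Hence, by induction on the number of vertices of $AH_t(v,i,j)$, the algorithm never cycles.

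There is no real obstacle here; the statement is a direct logical consequence of the previous proposition and is stated separately only for convenience of reference in the sequel.
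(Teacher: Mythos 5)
Your proposal is correct and matches the paper's intent exactly: the paper gives no separate proof for this corollary, treating it as an immediate consequence of the termination argument in Proposition~\ref{AH-Linear} (each recursive call is on a strictly smaller Almost Halin, so the recursion is well-founded and terminates, hence cannot cycle). Your additional remarks about the specific recursive calls only elaborate on the same well-foundedness argument already present in that proposition's proof.
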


Let $\inducedGraphRecursiveCalls$ be the induced digraph of recursive calls of \treeDecompAH{}. Each vertex $\mathbcal{u}$ is a call of the algorithm and there is a directed edge $ \mathbcal{u} \to \mathbcal{v}$ when call $\mathbcal{u}$ invokes call $\mathbcal{v}$. By \hyperref[Coro:NoCiclos]{Corollary 1} $\inducedGraphRecursiveCalls$ is a directed acyclic graph.

\begin{notation}
We denote by $T(t, w, a, b)$ the set of nodes $R$, $S$, $A$, $B$ along with the edges joining them, whenever these nodes are created during the call \treeDecompAHCall{t}{w}{a}{b}.
\end{notation}

\begin{proposition}
\label{AH-TreeDecomposition}
\treeDecompAHCall{t}{v}{i}{j} computes a tree decomposition of $AH_t(v, i, j)$ of width three.
\end{proposition}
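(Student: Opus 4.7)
The plan is to prove this by strong induction on the number of vertices of $AH_t(v,i,j)$, following the case split of the algorithm: a base case when $i=j$ with $\height(\child{v}{i})=0$, a single-recursion case when $i=j$ with $\height(\child{v}{i})>0$, and a branching case when $i<j$. In each case I would check the width bound (every created bag has size at most $4$, and the bound is attained at the bag of node $S$) together with the three tree-decomposition axioms C1, C2, and C3.

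The base case reduces to a single bag of size at most $3$ containing every vertex and every edge of the degenerate Almost Halin, which I can verify directly using the identity $\leaf{\child{v}{i}}{1}=\child{v}{i}$ when $\child{v}{i}$ is a leaf. In the single-recursion case I would verify that the bag prescribed for node $A$ equals the set of representatives of the recursive call \treeDecompAHCall{t}{\child{v}{i}}{1}{\degreechildren(\child{v}{i})}, which reduces to the observations that the first leaf in the subtree of $\child{v}{i}$ is $\leaf{\child{v}{i}}{1}$ and that $\omega_t(\child{v}{i},\degreechildren(\child{v}{i}))=\omega_t(v,i)$. After this identification, induction yields a valid decomposition below $A$; the only vertex of $AH_t(v,i,j)$ outside the recursive graph is $v$ itself, and the only edge of $AH_t(v,i,j)$ not carried by the recursive decomposition is the skeleton edge $(v,\child{v}{i})$, both of which are handled by the bags at $R$ and $S$. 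In the branching case, the adjacency order of Section~\ref{sec:AdjacencyOrder} supplies the key identity $\omega_1(v,i)=\leaf{\child{v}{i+1}}{1}$, which makes the bags of $A$ and $B$ coincide with the representatives of $AH_1(v,i,i)$ and $AH_t(v,i+1,j)$ respectively. C1 and C2 then follow because the vertex and edge sets of $AH_t(v,i,j)$ are exactly the unions of those of the two strictly smaller Almost Halin.

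The main obstacle is condition C3. Vertices that lie entirely inside a single recursive subtree inherit the connected-subtree property from the inductive hypothesis, so the careful work concentrates on vertices that appear both in the new bags $R,S,A,B$ and in a recursive subtree, or in both recursive subtrees. In the single-recursion case the shared vertices are $\leaf{\child{v}{i}}{1}$, $\omega_t(v,i)$, and $\child{v}{i}$; each is a representative of the recursive call (hence present in the bag of $A$) and appears in consecutive ancestors among $R,S,A$, so its set of bags is connected. In the branching case the vertices common to both recursive subtrees are exactly the shared representatives $v$ and $\leaf{\child{v}{i+1}}{1}$; both sit in the bags of $A$, $S$, and $B$, so the path through $S$ stitches the two connected subtrees furnished by induction. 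A short case analysis on the remaining representatives, such as $\omega_t(v,j)$ which appears in $R,S,B$ and in the right recursive subtree, completes the verification of C3 and hence the induction.
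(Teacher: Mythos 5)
Your proposal is correct and follows essentially the same route as the paper: an induction over the recursion of \treeDecompAH{} (you parameterize by vertex count, the paper by reverse topological order of the call DAG, which amounts to the same thing), verifying \ref{condition:C1}--\ref{condition:C3} in each case, with the crucial step for \ref{condition:C3} being the observation that any vertex shared between the recursive pieces and the new nodes $R,S,A,B$ must be a representative and hence sits in the connecting bags. Your explicit identities $\omega_t(\child{v}{i},\degreechildren(\child{v}{i}))=\omega_t(v,i)$ and $\omega_1(v,i)=\leaf{\child{v}{i+1}}{1}$ make precise what the paper asserts implicitly when it says the bags of $A$ and $B$ equal the representatives of the invoked calls.
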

\begin{proof}
We will prove that each call to \treeDecompAH{} generates a tree decomposition of its associated Almost Halin. 
This way we will obtain that the first call computes a tree decomposition of $AH_t(v, i, j)$. 
Let us proceed by induction in a reversed topological order of $\inducedGraphRecursiveCalls$.

Note that the first vertex in the order, say $\mathbcal{u}$ associated with \treeDecompAHCall{t}{w}{a}{b}, has outdegree $0$; equivalently, 
it does not invoke any other call. Thus, the construction obtained is $T(t, w, a, b)$ and it can be explicitly 
shown that the proposition is satisfied (See Appendix \ref{appendix:BaseCase}).

Now, assume by induction that a certain prefix of vertices in the order generates a tree decomposition of its corresponding Almost Halin of width three. 
We will now show that the next vertex $\mathbcal{u}$ associated with $AH_s(w, a, b)$ (for some $s \in \{1,2\}$) also accomplishes that.
Denote by $T_{\mathbcal{u}}$ the construction obtained by $\mathbcal{u}$.
If $\mathbcal{u}$ has outdegree $0$, the result holds by the previous argument. 
If $\mathbcal{u}$ has only one dependency, the reasoning for the two dependency case can be applied with slight modifications. Thus, we assume that $\mathbcal{u}$ has outdegree $2$. 
In that case, $a < b$ and $\mathbcal{u}$ depends on the construction of $AH_1(w, a, a)$ and $AH_s(w, a + 1, b)$ (Figure \ref{fig:AlmostHalinTreeDecomposition}).
Let $\mathbcal{v}$ and $\mathbcal{w}$ be vertices of $\inducedGraphRecursiveCalls$, which represents the dependencies of $\mathbcal{u}$. 
Necessarily, both $\mathbcal{v}$ and $\mathbcal{w}$ appeared before $\mathbcal{u}$ in the order. 
By the inductive hypothesis, we assume that they generated a valid tree decomposition of its corresponding Almost Halin of width three, 
say $T_{\mathbcal{v}}$ and $T_{\mathbcal{w}}$ respectively. Since $T(s,w, a, b)$ is a tree to which we appended both $T_{\mathbcal{v}}$ and $T_{\mathbcal{w}}$ to the leaves $A$ and $B$, we find that $T_{\mathbcal{u}}$ is also a tree.

Given that the Almost Halin of both $T_{\mathbcal{v}}$ and $T_{\mathbcal{w}}$ are strictly smaller than $AH_s(w, a, b)$, all vertices included in their bags are also vertices of $AH_s(w, a, b)$. Additionally, by construction, each vertex of $AH_s(w, a, b)$ is included in at least one of $AH_1(w, a, a)$ or $AH_s(w, a + 1, b)$. Hence, $T_{\mathbcal{u}}$ satisfies \ref{condition:C1}.

Similarly, all the edges are present in one of the dependencies.
Therefore, $T_{\mathbcal{u}}$ satisfies \ref{condition:C2}.

For the last needed condition \ref{condition:C3}, consider a vertex $z$ of $AH_s(w, a, b)$. We have the following cases:
\begin{itemize}
    \item $z$ only appears in $T(s,w, a, b)$. 
        We can explicitly state that \ref{condition:C3} holds.
    \item $z$ appears in one of its dependencies and not in $T(s,w, a, b)$. 
        It will only appear in the corresponding dependency, so \ref{condition:C3} is satisfied by assumption.
    \item $z$ appears in one of its dependencies as well as in $T(s,w, a, b)$. Notice that by construction of $T(s,w, a, b)$, $z$ needs to be one of the representatives of the corresponding dependency. This means that $z$ will appear in the bag of either $A$ or $B$. Thus, the component that includes $z$ is connected.
    \item $z$ appears in both dependencies. Since $z$ is a common vertex between $AH_1(w, a, a)$ and $AH_s(w, a + 1, b)$, it follows that $z$ is either $v$ or $x$, which are representatives of the two subgraphs. By construction, both $v$ and $x$ appear in the bags of both $A$ and $B$. Thus, the component that includes $z$ is connected.
\end{itemize}
In any case, the component that includes $z$ is connected. Therefore, $T_{\mathbcal{u}}$ satisfies \ref{condition:C3}.

Thus, by the principle of mathematical induction, we conclude that each recursive call generates a tree decomposition of its associated Almost Halin. 
Therefore, at the end, we obtain a tree decomposition of $AH_t(v, i, j)$. Since each bag contains at most four vertices, the resulting tree decomposition has width three.
\end{proof}


\section{Halin graph tree decomposition}
\label{section:HalinPropuesta}

This section presents \treeDecompHalinCall{H}, a linear time algorithm that computes a tree decomposition 
of width three for a Halin graph $H$ with skeleton $\skeleton$ and leaf cycle $C$. 
Based on the previous results the function \treeDecompHalinCall{H} is concise and consists of computing the adjacency order and making a single call to \treeDecompAH, as follows.

\section*{\treeDecompHalinCall{H}} \label{HTD-algorithm}
\begin{enumerate}[start=1,label={(\bfseries Step \arabic*)},leftmargin=3.5em]
    \item Preprocess $H$ using Eppstein's algorithm~\citep{Eppstein:2016} to internally obtain the skeleton $\skeleton$ and the leaf cycle $C$
    \item Root $\skeleton$ at an internal vertex $r$ with degree at least $3$ and compute the adjacency order as described in Section \ref{sec:AdjacencyOrder} \label{Init:Step1}
    \item Return \treeDecompAHCall{ 2 }{ r }{ 1 }{ \degreechildren(r) }
\end{enumerate}

\begin{proposition}
\treeDecompHalinCall{H} computes a tree decomposition of width three for a Halin graph $H$ in linear time
\end{proposition}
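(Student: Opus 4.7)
The plan is to combine three ingredients already established: Eppstein's recognition procedure, the linearity of the adjacency-order computation, and the guarantees of \treeDecompAH. For the running-time bound I would argue step by step that Step~1 invokes Eppstein's algorithm~\citep{Eppstein:2016} in linear time to split $H$ into its skeleton $\skeleton$ and leaf cycle $C$; Step~2 picks a root $r$ of degree at least three (which exists by Remark~\ref{remark:vertexdegree3}) and rearranges the adjacency lists in linear time as described in Appendix~\ref{appendix:LinearTeo1}; and Step~3 is a single call to \treeDecompAH, which by Proposition~\ref{AH-Linear} finishes in time linear in the number of vertices of its input. Since that input spans all of $V_H$, the total cost is $O(|V_H|)$.

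For correctness I would apply Proposition~\ref{AH-TreeDecomposition} to the output of Step~3 to obtain a tree decomposition $\mathbb{T}$ of $AH_2(r, 1, \degreechildren(r))$ of width three, and then upgrade it to a tree decomposition of $H$. The subtle point is that by the degenerate clause in the definition of $AH_2$, the graph $AH_2(r, 1, \degreechildren(r))$ differs from $H$ in exactly one edge: the edge between $\leaf{r}{1}$ and $\omega_2(r, \degreechildren(r))$ is deliberately excluded. Conditions \ref{condition:C1} and \ref{condition:C3} depend only on vertex sets and on which bags contain which vertices, so adding a single edge to the underlying graph cannot break them. For \ref{condition:C2} I would observe that Step~1 of \treeDecompAH at the top-level call places the set of representatives $\{r, \leaf{r}{1}, \omega_2(r, \degreechildren(r))\}$ into the root bag; both endpoints of the missing edge therefore sit together in that bag, so this edge is covered as well. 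Hence $\mathbb{T}$ is also a tree decomposition of $H$, and since every bag produced by \treeDecompAH contains at most four vertices, the width is three. (This matches the known treewidth of Halin graphs~\citep{bodlaender1988planar}, so the decomposition is in fact optimal.)

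The only delicate spot in the argument is precisely this bookkeeping: making sure that the single edge dropped in the degenerate definition of $AH_2(r, 1, \degreechildren(r))$ is captured by the top-level bag generated by \treeDecompAH. Once that observation is in place, everything else is a direct aggregation of Propositions~\ref{AH-Linear} and~\ref{AH-TreeDecomposition} with Eppstein's linear-time recognition.
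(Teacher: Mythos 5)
Your proposal is correct and follows essentially the same route as the paper's own proof: both reduce everything to Propositions~\ref{AH-Linear} and~\ref{AH-TreeDecomposition} and then observe that the single edge excluded from $AH_2(r, 1, \degreechildren(r))$ joins two representatives, which by \ref{AlmostHalin1:Step1} lie together in the root bag. Your version is merely more explicit about the linearity of the preprocessing steps and about why conditions \ref{condition:C1} and \ref{condition:C3} survive the addition of that edge.
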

\begin{proof}
By the Propositions~\ref{AH-Linear} and~\ref{AH-TreeDecomposition} it follows that \treeDecompAHCall{ 2 }{ r }{ 1 }{ \degreechildren(r) } computes a tree decomposition of width three for $AH_2(r, 1, \degreechildren(r))$ in linear time. Thus, the only possible edge that may not be included in any bag is the one connecting $\leaf{r}{1}$ with $\leaf{r}{\infty}$. However, by definition of the representatives of $AH_2(r, 1, \degreechildren(r))$, these two vertices are included in the bag of the node created in \ref{AlmostHalin1:Step1} and the result follows. 
\end{proof}


\section{Experimental Validation} \label{section:ExperimentalValidation}

In this section we present the performance of $\treeDecompHalin$. 
We compare it against the available implementations for computing tree decompositions presented in the PACE 2017~\citep{PACE2017} and in \texttt{libtw}~\citep{libtw}. 
Additionally, the correctness of the tree decomposition produced by \treeDecompHalin{} was verified using the validator provided in PACE 2017. 

In particular we compare against the two best exact algorithms from PACE 2017, referred to as E-Larisch and E-Tamaki and with the two best heuristics referred to as H-Tamaki and H-Strasser. 
For \texttt{libtw} we used the recommended approach based on the results reported by the authors. 
Specifically, a lower and an upper bound for the treewidth are first computed using the Least-C variant of Maximum Minimum Degree and the Greedy-FillIn heuristic, respectively.  
If the two values match, the tree decomposition is retrieved; otherwise, the QuickBB algorithm is used.  
We refer to this procedure as libtw(1).  
In practice we observed that the lower and upper bounds always matched. 
Therefore, we conducted a separate test that directly computes the tree decomposition using only the QuickBB algorithm which we refer to as libtw(2). 

We performed the experiments on a set of instances generated as follows. 
Let $N$ be a uniformly sampled integer within a given range. 
Create a graph with $N$ vertices, and for each $i \in [2, N]$ add an edge between vertex $i$ and a uniformly selected vertex from $[1, i-1]$. 
For each vertex of degree two create an additional vertex and connect it to it. 
Perform a depth-first search starting from vertex $1$, the order in which the leaves are visited defines the leaf cycle, so we add an edge between each pair of consecutive leaves. 
Finally, relabel the vertices using a random permutation. 
The correctness of this generator follows the same ideas presented in Section~\ref{sec:AdjacencyOrder}. 
We generated $40$ instances divided in four groups: Small, Medium, Large and Giant. 
The value of $N$ was sampled uniformly from the ranges $[10^2, 10^3-1], [10^3, 10^4-1], [10^4, 10^5-1]$ and $[10^5, 10^6-1]$ respectively. 
Note however that the final number of vertices may slightly exceed the upper bound due to the additional vertices created during the process. 

Recall that $\treeDecompHalin$ decomposes the Halin graph into the skeleton $\skeleton$ and the leaf cycle $C$ in linear time using the set of reductions presented in~\cite{Eppstein:2016}. 
The author provided an implementation in Python, a slow interpreted language in comparison to C/C++, and noted that time measurements would not serve as a reliable indicator of the algorithm's efficiency. 
We implemented the Halin recognition procedure in C++, including the construction of a valid decomposition into $\skeleton$ and $C$ in the positive case. 
Table~\ref{Table:HalinRecognitionTimes} presents the average runtime of both implementations across all instance groups. 
As expected the C++ implementation offers a clear performance advantage and is therefore used as the preprocessor in our experiments. 

All the experiments were executed on an Intel Xeon E5-2620 v2 2.1 GHz processor with 32 GB of RAM running on a Linux system. 
The implementation of \treeDecompHalin{}, the C++ version of Eppstein’s Halin recognition and decomposition and the set of instances used are freely available~\footnote{Since the paper is currently under submission, the source code is not yet publicly released. It is available upon request. A permanent link will be provided upon publication.}.

\begin{table}[t]
\centering
\begin{tabular}{ccc}\hline
       & Python    & C++     \\ \hline
Small  & 0.0749  &   0.0022\\
Medium & 0.1975  &   0.0090\\
Large  & 1.4396  &   0.0905\\
Giant  & 16.5318 &   1.5271\\ \hline
\end{tabular}
\caption{Average execution time of the Halin recognition and decomposition in Python and C++.}
\label{Table:HalinRecognitionTimes}
\end{table}

Table~\ref{Table:RuntimeExact} reports the average execution time of the exact algorithms for each group of instances. 
For all algorithms, the execution time was limited to a maximum of 24 hours. 
E-Larisch and E-Tamaki were unable to solve any instance in the Large and Giant groups due to segmentation faults caused by excessive memory usage. 
For the same reason, E-Tamaki was able to solve seven out of the ten Medium instances. 
Both libtw(1) and libtw(2) successfully solved eight Large instances. 
In the Giant group, libtw(1) solved three instances while libtw(2) solved four. 
Across all groups, $\treeDecompHalin$ was significantly faster.

\begin{table}[t]
\centering
\begin{tabular}{cccccc}\hline
       & $\treeDecompHalin$   & E-Larisch & E-Tamaki & libtw(1)  & libtw(2)  \\ \hline
Small  &  0.0066  & 0.0296  & 1.3460 & 0.2216 &  0.3225\\
Medium &  0.0269  & 0.5547  & 286.9072* & 4.6300 & 5.1874\\
Large  &  0.2896  & -       & -    &    581.8704* & 739.8171*\\
Giant  &  2.7925 & -       & -      &  8645.5772* & 49171.5802*\\ \hline
\end{tabular}
\caption{Average execution time of the exact algorithms. *The algorithm did not solved all the instances and the average was taken from the ones that solved.}
\label{Table:RuntimeExact}
\end{table}

Table~\ref{Table:RuntimeHeuristic} shows the performance of the heuristic algorithms. 
These algorithms expect a SIGTERM signal to terminate, at which point the best decomposition found is returned as output. 
The heuristics were run for $100$ times the execution time of $\treeDecompHalin$ (Table~\ref{Table:RuntimeExact}), after which they were terminated by sending a SIGTERM signal. 
The average width of the decompositions found is reported. 
Surprisingly, H-Strasser performed significantly better than H-Tamaki, the winner of PACE 2017. 
Note that, even with the huge increase of the stopping criterion, none of the heuristic methods could solve the medium, large or giant instances to optimality. 
Thus, \textsc{H-Td} is not only a proven exact method, but also demonstrates practical superiority over existing heuristic approaches on this class of graphs. 

\begin{table}[t]
\centering
\begin{tabular}{lcc} \hline
        & H-Tamaki    & H-Strasser \\ \hline
Small   & 581.3     & 3        \\
Medium  & 6041.3     & 3.3      \\
Large   & 44100.8  & 11.6     \\
Giant & 500761.5 & 31.7    \\ \hline
\end{tabular}
\caption{Average treewidth achieved by the heuristic algorithms when run for $100$ times the execution time of \treeDecompHalin{}.}
\label{Table:RuntimeHeuristic}
\end{table}

\section{Concluding remarks}
\label{Section:Concluding}

In this paper we proposed \textsc{H-Td}, a linear-time algorithm that computes an optimal-width tree decomposition for Halin graphs.
Unlike reduction-based approaches for partial $k$-trees, our method exploits the structural properties of Halin graphs and constructs the decomposition directly, without reducing the input graph.
This distinction is significant as the reduction rules for $k = 4$ and beyond become increasingly complex and impractical. 
Experimental validation against both exact and heuristic algorithms from PACE 2017 and \texttt{libtw} demonstrates that \textsc{H-Td} is not only theoretically efficient but also practical. 
We believe that this work highlights the potential of structural approaches and may inspire further development of optimal and efficient algorithms for other graph classes.

\bibliographystyle{unsrtnat}
\bibliography{references}  

\appendix

\section{How to compute Proposition~\ref{TEO:ORDEN} in linear time}
\label{appendix:LinearTeo1}
The root $r$ can be selected by iteratively searching a vertex with degree at least three in $O(|V_T|)$ time. To find a suitable $v$, we leverage the fact that $r$ has at least three distinct children. We can run a DFS starting from one of these children to explore and color its subtree. 
Given that $E_C$ forms a cycle, we can iterate over its edges. 
In doing so, we are guaranteed to find an adjacent pair of leaves $u$ and $w$ such that $u$ is colored and $w$ is not. 
By construction, the lowest common ancestor of $u$ and $w$ is $r$, allowing us to set $v = u$. 
As this process consists of a DFS and a traversal of $E_C$, it runs in $O(|V_T|)$ time. 

Note that once $r$, $v$ and $w$ are selected, we can compute the $x_u$ values for each leaf $u$ by traversing $E_C$. 
Recall that the $x$ value for an internal node $y$ is defined as the minimum $x$ value of a leaf inside the subtree of $y$. 
Thus, we can iterate the leaves in increasing order by their $x$ value and climbing toward the root, setting the $x$ value 
of each ancestor to its own $x$ value, stopping when an ancestor already has an assigned $x$ value. 
Since the $x$ value of each vertex is set exactly once, the overall procedure runs in linear amortized time.

Using the same principle, we can determine the adjacency order by first clearing the adjacency lists while preserving the parent-child relationships. 
Then, by iterating through the leaves in increasing order of their $x$ values and climbing toward the root, for each current vertex $y$, if the edge between $y$ and $p_y$ has not been created, we insert $y$ at the end of the adjacency list of its parent $p_y$ creating this edge; otherwise we stop climbing.

\section{$T(w, a, b)$ induces a tree decomposition when \treeDecompAHCall{t}{w}{a}{b} does not invoke any recursive call}
\label{appendix:BaseCase}
Note that if $a < b$, at least one recursive call will be invoked in \ref{AlmostHalin1:Step3}. 
Thus, it must hold that $a = b$. 
In this case, only \ref{AlmostHalin1:Step1} applies, as we ultimately proceed to \ref{AlmostHalin1:Step4}, where $T(w, a, b)$ is returned. 
To avoid invoking a recursive call, we must also ensure that $h(\child{w}{a}) = 0$.  
This condition holds only if $\child{w}{a}$ is a leaf, which implies that $\child{w}{a} = \leaf{ \child{w}{a} }{1}$. 
Therefore, the Almost Halin graphs for which \treeDecompAHCall{t}{w}{a}{b} does not invoke any recursive call are of the form illustrated in Figure~\ref{fig:AppendixB}.
Moreover, the tree decomposition returned consists of a single vertex with its bag equal to the set of representatives of the corresponding Almost Halin graph.
In this case, the decomposition conditions are trivially satisfied.


\begin{figure}[htbp]
    \centering
    \begin{subfigure}[b]{0.35\textwidth}
        \centering
        \includegraphics[width=\textwidth]{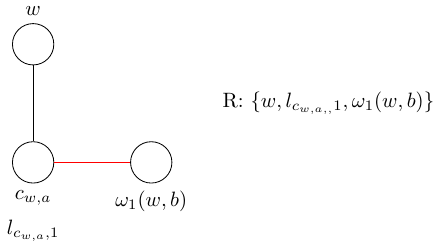}
        \caption{The case of an Almost Halin type $1$}
        \label{fig:imageA}
    \end{subfigure}
    \hfill
    \begin{subfigure}[b]{0.3\textwidth}
        \centering
        \includegraphics[width=\textwidth]{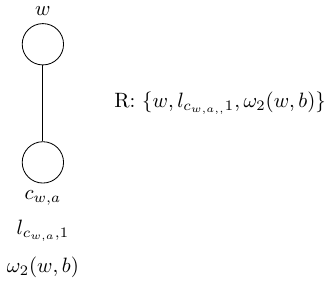}
        \caption{The case of an Almost Halin type $2$}
        \label{fig:imageB}
    \end{subfigure}
    \caption{Cases in which \treeDecompAHCall{t}{w}{a}{b} does not invoke any recursive call}
    \label{fig:AppendixB}
\end{figure}






\end{document}